\newcommand\ie{{\em i.e.}}
\def\B{\mathscr B}
\def\C{\mathbb C}
\def\CC{\mathfrak C}
\def\H{\mathcal H}
\def\K{\mathcal K}
\def\N{\mathbb N}
\def\NN{\mathcal N}
\def\O{\mathcal O}
\def\R{\mathbb R}
\def\T{\mathbb T}
\def\U{\mathscr U}
\def\Z{\mathbb Z}
\def\supp{\mathrm{supp}\;\!}
\def\e{\mathop{\mathrm{e}}\nolimits}
\def\Rep{{\mathfrak{Rep}}}
\def\tomega{\boldsymbol \omega}
\def\tf{\boldsymbol f}
\def\tg{\boldsymbol g}
\def\th{\boldsymbol h}
\newtheorem{Theorem}{Theorem}[section]
\newtheorem{Remark}[Theorem]{Remark}
\newtheorem{Lemma}[Theorem]{Lemma}
\newtheorem{Proposition}[Theorem]{Proposition}
\newtheorem{Definition}[Theorem]{Definition}
\begin{document}


\title{Continuity of the spectra  \\ for families of magnetic operators on $\Z^d$}

\author{D. Parra$^1$~~and S. Richard$^2$\footnote{Supported by JSPS Grant-in-Aid for Young Scientists A
no 26707005.}}

\date{\small}
\maketitle \vspace{-1cm}

\begin{quote}
\emph{
\begin{itemize}
\item[$^1$] Universit\'e de Lyon; Universit\'e
Lyon 1; CNRS, UMR5208, Institut Camille Jordan,
43 blvd du 11 novembre 1918, F-69622
Villeurbanne-Cedex, France
\item[$^2$] Graduate school of mathematics, Nagoya University,
Chikusa-ku, Nagoya 464-8602, Japan; On leave of absence from
Universit\'e de Lyon; Universit\'e
Lyon 1; CNRS, UMR5208, Institut Camille Jordan,
43 blvd du 11 novembre 1918, F-69622
Villeurbanne-Cedex, France
\item[] \emph{E-mails:} parra@math.univ-lyon1.fr, richard@math.nagoya-u.ac.jp
\end{itemize}
}
\end{quote}


\begin{abstract}
For families of magnetic self-adjoint operators on $\Z^d$ whose symbols and magnetic fields
depend continuously on a parameter $\epsilon$, it is shown that the main spectral properties of these operators
also vary continuously with respect to $\epsilon$.
The proof is based on an algebraic setting involving twisted crossed product $C^*$-algebras.
\end{abstract}

\textbf{2010 Mathematics Subject Classification:} 81Q10, 47L65

\smallskip

\textbf{Keywords:} Discrete operators, magnetic field, spectrum, twisted crossed product algebra

\section{Introduction}
\setcounter{equation}{0}

The continuity of the spectra for families of self-adjoint operators in a Hilbert space
has been considered for several decades, but many natural questions have only received partial answers yet.
In this paper we consider a fairly general family of
magnetic Schr\"odinger operators acting on $\Z^d$ and exhibit some continuity properties of
the spectra under suitable modifications of the magnetic fields and of the symbols defining the operators.
In rough terms, the continuity we are dealing with corresponds to the stability of the spectral gaps
as well as the stability of the spectral compounds. In a more precise terminology
we shall prove inner and outer continuity for the family of spectra, as defined below.

In the discrete setting, the Harper operator is certainly the preeminent example and much efforts
have been dedicated to its study and to generalizations of this model.
It is certainly impossible to mention all papers dealing with continuity properties of families of such operators, but let us
cite a few of them which are relevant for our investigations.
First of all, let us mention the seminal paper \cite{Bel} in which the author proves the Lipschitz continuity of gap boundaries with respect to the variation
of a constant magnetic field for a family of pseudodifferential operators acting on $\Z^2$.
In \cite{Kot} and based on the framework introduced in \cite{Sun}, similar Lipschitz continuity is proved for
self-adjoint operators acting on a crystal lattice, a natural generalization of $\Z^d$.
Note that in these two references a $C^*$-algebraic framework is used, as we shall do it later on.
On the other hand, papers \cite{Nen} and \cite{Cor} deal with families of magnetic pseudodifferential operators on $\Z^2$ only but
continuity results are shown for more general symbols and magnetic fields.

Before introducing the precise framework of our investigations, let us still mention two additional papers
which are at the root of our work: \cite{MPR} in which a general framework for magnetic systems,
involving twisted crossed product $C^*$-algebras, is introduced and \cite{AMP} which contains results similar to ours but in a continuous setting.

In the Hilbert space $\H:=l^2(\Z^d)$ and for some fixed parameter $\epsilon$ let us consider operators of the form
\begin{equation}\label{eq_def_H}
[H^\epsilon u](x):=\sum_{y\in \Z^d} h^\epsilon(x;y-x)\e^{i\phi^\epsilon(x,y)}u(y)
\end{equation}
with $u\in \H$ of finite support, $x\in \Z^d$ and where $h^\epsilon: \Z^d\times \Z^d\to \C$ and $\phi^\epsilon:\Z^d\times \Z^d \to \R$ satisfy
\begin{enumerate}
\item[(i)] $\sum_{x\in \Z^d}\sup_{q\in \Z^d}|h^\epsilon(q;x)|<\infty$,
\item[(ii)] $\overline{h^\epsilon(q+x;-x)}=h^\epsilon(q;x)$ for any $q,x\in \Z^d$,
\item[(iii)] $\phi^\epsilon(x,y)=-\phi^\epsilon(y,x)$ for all $x,y\in \Z^d$.
\end{enumerate}
Such operators are usually called \emph{discrete magnetic Schr\"odinger operators}.
Note that condition (i) ensures that $H^\epsilon$ extends continuously to a bounded operator in $\H$, while conditions (ii) and (iii)
imply that the corresponding operator is self-adjoint. In the sequel a map $\phi:\Z^d\times\Z^d\to \R$ satisfying $\phi(x,y)=-\phi(y,x)$ for any $x,y\in \Z^d$
will simply be called a \emph{magnetic potential}.

Let us consider a compact Hausdorff space $\Omega$ and assume that $\epsilon \in \Omega$.
A natural question in this setting is the following: Under which regularity conditions on the maps $\epsilon\mapsto h^\epsilon$
and $\epsilon \mapsto \phi^\epsilon$ can one get some continuity for the spectra of the family of operators $\{H^\epsilon\}_{\epsilon\in \Omega}$,
and what kind of continuity can one expect on these sets ?
As already mentioned above, we shall consider the notion of inner and outer continuity, borrowed from \cite{AMP} but originally inspired by \cite{Bel}.

\begin{Definition}
Let $\Omega$ be a compact Hausdorff space, and let $\{\sigma_\epsilon\}_{\epsilon\in\Omega}$ be a family of closed subsets of $\R$.
\begin{enumerate}
\item The family $\{\sigma_\epsilon\}_{\epsilon\in \Omega}$ is \emph{outer continuous at $\epsilon_0\in \Omega$} if for any compact subset $\K$ of $\R$
such that $\K\cap \sigma_{\epsilon_0}=\emptyset$ there exists a neighbourhood $\NN=\NN(\K,\epsilon_0)$ of $\epsilon_0$ in $\Omega$ such that
$\K\cap \sigma_{\epsilon}=\emptyset$ for any $\epsilon\in \NN$,
\item The family $\{\sigma_\epsilon\}_{\epsilon\in \Omega}$ is \emph{inner continuous at $\epsilon_0\in \Omega$} if for any open subset $\O$ of $\R$
such that $\O\cap \sigma_{\epsilon_0}\neq \emptyset$ there exists a neighbourhood $\NN=\NN(\O,\epsilon_0)$ of $\epsilon_0$ in $\Omega$ such that
$\O\cap \sigma_{\epsilon}\neq\emptyset$ for any $\epsilon\in \NN$.
\end{enumerate}
\end{Definition}

Let us now present a special case of our main result which will be stated in Theorem \ref{thm_main}.
The following statement is inspired from \cite{Nen} and a comparison with the existing literature will be established just afterwards.

\begin{Theorem}\label{thm_Nenciu}
For each $\epsilon \in \Omega:=[0,1]$ let $h^\epsilon: \Z^d\times \Z^d\to \C$ satisfy the above conditions (i) and (ii).
Assume that the family $\{h^\epsilon\}_{\epsilon\in \Omega}$ satisfies for any $y\in \Z^d$ the condition
$$
\lim_{\epsilon'\to \epsilon}\sup_{q\in \Z^d}|h^{\epsilon'}(q;y)-h^\epsilon(q;y)|=0
$$
and $|h^\epsilon(q;y)|\leq f(y)$ for some $f\in l^1(\Z^d)$, all $q\in \Z^d$ and all $\epsilon \in \Omega$.
Let also $\phi$ be a magnetic potential which satisfies
$$
\big|\phi(x,y)+\phi(y,z)+\phi(z,x)\big| \leq \hbox{ area } \triangle (x,y,z),
$$
where $\triangle (x,y,z)$ means the triangle in $\R^d$ determined by the three points $x,y,z\in \Z^d$.
Then for $H^\epsilon$ defined on $u\in \H$ by
$$
[H^\epsilon u](x):=\sum_{y\in \Z^d} h^\epsilon(x;y-x)\e^{i\epsilon\phi(x,y)}u(y)
$$
the family of spectra $\sigma(H^\epsilon)$ forms an outer and an inner continuous family at every points $\epsilon\in \Omega$.
\end{Theorem}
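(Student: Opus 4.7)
The strategy is to recognise the family $\{H^\epsilon\}_{\epsilon\in\Omega}$ as the images, under one faithful representation, of a continuous section $\{f^\epsilon\}_{\epsilon\in\Omega}$ of a continuous field of twisted crossed product $C^*$-algebras, and then invoke the general Theorem \ref{thm_main}.

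First, I would introduce the magnetic field associated with the gauge $\phi$, namely the $2$-cocycle
$$
B(x,y,z) := \phi(x,y)+\phi(y,z)+\phi(z,x), \qquad x,y,z\in\Z^d,
$$
and set $B^\epsilon := \epsilon B$ for each $\epsilon\in[0,1]$. A direct check based on the antisymmetry of $\phi$ shows that $\e^{iB^\epsilon}$ is a $\T$-valued normalised $2$-cocycle on $\Z^d$, as required for the twisted crossed product construction of \cite{MPR}. The point is that only $B^\epsilon$, not the chosen gauge $\epsilon\phi$, enters the intrinsic $C^*$-algebraic structure, and the area bound yields $|B^\epsilon(x,y,z)|\le \epsilon\cdot\mathrm{area}\,\triangle(x,y,z)$, so that $\epsilon\mapsto \e^{iB^\epsilon}$ is continuous pointwise on triples.

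Second, I would choose a translation-invariant unital $C^*$-subalgebra $\A\subseteq \linf(\Z^d)$ large enough to contain every function $q\mapsto h^\epsilon(q;y)$ for $y\in\Z^d$ and $\epsilon\in\Omega$ (for instance the closure of the algebra generated by these functions and their $\Z^d$-translates). Define $f^\epsilon\in\lone(\Z^d;\A)$ by $[f^\epsilon(x)](q):=h^\epsilon(q;x)$. Condition (ii) together with the domination $|h^\epsilon(q;y)|\le f(y)$ place $f^\epsilon$ in the selfadjoint part of the twisted crossed product $\A\rtimes_{B^\epsilon}\Z^d$, and a routine computation identifies the image of $f^\epsilon$ under the canonical left regular (Schr\"odinger) representation on $\H$ with $H^\epsilon$; faithfulness of this representation (once $\A$ is taken large enough) gives $\sigma(H^\epsilon)=\sigma(f^\epsilon)$.

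Third, I would verify that $\{f^\epsilon\}$ is continuous as a section of the field $\bigl\{\A\rtimes_{B^\epsilon}\Z^d\bigr\}_{\epsilon\in\Omega}$. The hypothesis
$$
\lim_{\epsilon'\to\epsilon}\sup_{q\in\Z^d}\big|h^{\epsilon'}(q;y)-h^\epsilon(q;y)\big|=0
$$
gives pointwise (in $y$) convergence of $\|f^{\epsilon'}(y)-f^\epsilon(y)\|_\A$, while the summable majorant $2f(y)$ lets one conclude $\|f^{\epsilon'}-f^\epsilon\|_{\lone(\Z^d;\A)}\to 0$ by dominated convergence. Combined with the cocycle continuity from step one, this delivers continuity of $\{f^\epsilon\}$ in the field norm, and Theorem \ref{thm_main} then yields the outer and inner continuity of $\sigma(H^\epsilon)$.

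The main obstacle I foresee is the assembly step: organising the $\epsilon$-dependent twisted crossed products into a single continuous field over $\Omega$ for which the $\lone$-convergence above really implies convergence in the fibrewise $C^*$-norm. This is precisely where the area bound on $\phi$ (ensuring uniform control of $\e^{iB^{\epsilon'}}-\e^{iB^\epsilon}$ on each triangle) and a careful choice of $\A$ are decisive; the framework of \cite{MPR} is tailored to this situation and is presumably packaged into the hypotheses of Theorem \ref{thm_main}.
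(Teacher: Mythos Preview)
Your plan is essentially the paper's own argument: Proposition~\ref{prop_triangle} shows that $\tomega(q,\epsilon;x,y):=\e^{i\epsilon B(q,q+x,q+x+y)}$ is a continuous field of $2$-cocycles, the hypotheses on $\{h^\epsilon\}$ are matched against those of Theorem~\ref{thm_main}, and $H^\epsilon$ is identified with $\Rep^{\lambda^\epsilon}(h^\epsilon)$ for the $1$-cochain $\lambda^\epsilon(q;x)=\e^{i\epsilon\phi(q,q+x)}$. Two sharpenings are worth noting: the cocycle continuity required by Definition~\ref{def_cont_omega}(ii) is not ``pointwise on triples'' (which is trivial) but uniform in the base vertex $q$, and the area bound delivers exactly this because $\hbox{area }\triangle(q,q+x,q+x+y)=\hbox{area }\triangle(0,x,x+y)$ is $q$-independent; moreover the auxiliary subalgebra $\A$ is unnecessary, since the paper works directly with $l^\infty(\Z^d)$ and faithfulness of $\Rep^{\lambda^\epsilon}$ is already guaranteed by Proposition~\ref{prop_MPR}(iii).
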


Observe that if one considers a function $h\in l^1\big(\Z^d;l^\infty(\Z^d)\big)$ independent of $\epsilon$ and which satisfies $\overline{h(q+x;-x)}=h(q;x)$ for any $q,x\in \Z^d$,
then the various assumptions on the family $\{h^\epsilon\}_{\epsilon \in \Omega}$ are easily checked.
In \cite{Nen} the case $d=2$ is considered for a fixed symbol $h$ satisfying a decay of the form $\sup_{q\in \Z^d}|h(q;x)|\leq C\e^{-\beta|x|}$, where $0<\beta\leq 1$ and $|x|$ denotes the Euclidean norm in $\Z^2$.
In this framework, stronger continuity properties of the family of spectra are obtained, but these results deeply depend on the parameter $\beta$.
On the other hand our results are somewhat weaker but hold for a much more general class of symbols. In addition, more general $\epsilon$-dependent magnetic potentials
are considered in our main result.

Let us now emphasize that the framework presented in Section \ref{sec_field} does not allow us to get any quantitative estimate,
as emphasized in the recent paper \cite{BB}. Indeed, the very weak continuity requirement we impose on the $\epsilon$-dependence on our objets
can not lead to any Lipschitz or H\"older continuity. More stringent assumptions are necessary for that purpose, and such estimates
certainly deserve further investigations.

Our approach relies on the concepts of twisted crossed product $C^*$-algebras and on a field of such algebras, mainly borrowed from \cite{Rie,Zel}.
In the discrete setting, such algebras have already been used, for example in \cite{Bel,Kot,Sun}. However, instead of considering
a $2$-cocycle with scalar values, which is sufficient for the case of a constant magnetic field, our $2$-cocycles take values
in the group of unitary elements of $l^\infty(\Z^d)$. This allows us to consider arbitrary magnetic potential on $\Z^d$ and to encompass
all the corresponding operators in a single algebra.

Let us finally describe the content of this paper. In Section \ref{sec_magn_alg} we introduce the framework for a single magnetic system,
\ie~for a fixed $\epsilon$. For that reason, no $\epsilon$-dependence is indicated in this section. In Section \ref{sec_field} the $\epsilon$-dependence
is introduced and the continuous dependence on this parameter is studied. Our main result is presented in Theorem \ref{thm_main}.
In the last section, we provide the proof of Theorem \ref{thm_Nenciu}.

\section{Discrete magnetic systems}\label{sec_magn_alg}
\setcounter{equation}{0}

This section is divided into three parts. First of all, we motivate the introduction of the algebraic formalism by showing
that any magnetic potential leads naturally to the notion of a normalized $2$-cocycle with one additional property.
Based on this observation, we introduce in the second part of the section a special instance of a twisted crossed product $C^*$-algebra.
A faithful representation of this algebra in $l^2(\Z^d)$ is also provided.
In the third part, we draw the connections of this abstract construction with the initial magnetic system.

\subsection{From magnetic potentials to $2$-cocycles}

We start by recalling that a magnetic potential consists in a map $\phi:\Z^d\times \Z^d\to \R$ satisfying for any $x,y\in \Z^d$ the relation
\begin{equation}\label{eq_phi}
\phi(x,y)=-\phi(y,x).
\end{equation}
Then, given such a magnetic potential $\phi$ let us introduce and study a new map
$$
\omega: \Z^d\times \Z^d \times \Z^d\to \T
$$
defined for $q,x,y\in \Z^d$ by
\begin{equation}\label{def_omega}
\omega(q;x,y):=\exp\big\{i\big[\phi(q,q+x)+\phi(q+x,q+x+y)+\phi(q+x+y,q)\big]\big\}.
\end{equation}
Note that the distinction between the variable $q$ and the variables $x$ and $y$ is done on purpose.
Indeed, for fixed $x,y\in \Z^d$ we shall also use the notation $\omega(x,y)$ for the map
\begin{equation*}
\omega(x,y):\Z^d\ni q \mapsto [\omega(x,y)](q):=\omega(q;x,y) \in  \T.
\end{equation*}

Since $\Z^d$ acts on itself by translations, let us introduce the action $\theta$ of $\Z^d$ on any $f\in l^\infty(\Z^d)$ by
\begin{equation}\label{def_theta}
\theta_xf(y)=f(x+y).
\end{equation}
In particular, since $\omega(x,y)\in l^\infty(\Z^d)$ we have
\begin{equation*}
\big[\theta_z\omega(x,y)\big](q):=[\omega(x,y)](q+z)= \omega(q+z;x,y).
\end{equation*}
Based on these definitions, the following properties for $\omega$ can now be proved:

\begin{Lemma}\label{lem_2_cocycle}
Let $\phi$ be a magnetic potential and let $\omega$ defined by \eqref{def_omega}.
Then for any $x,y,z\in \Z^d$ the following properties hold:
\begin{enumerate}
\item[(i)] $\omega(x+y,z)\;\omega(x,y)  = \theta_x\omega(y,z)\;\omega(x,y+z)$,
\item[(ii)] $\omega(x,0)=\omega(0,x)=1$,
\item[(iii)] $\omega(x,-x)=1$.
\end{enumerate}
\end{Lemma}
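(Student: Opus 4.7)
The plan is entirely computational: all three identities reduce to expanding $\omega$ via \eqref{def_omega} and exploiting the antisymmetry \eqref{eq_phi} of $\phi$.

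For (ii) and (iii), I evaluate at an arbitrary $q\in\Z^d$; the exponent defining $\omega(q;x,0)$, $\omega(q;0,x)$, respectively $\omega(q;x,-x)$, is a sum of three $\phi$-terms in which the would-be triangle degenerates: either two arguments of some $\phi$ coincide (so that term is $0$ thanks to $\phi(a,a)=-\phi(a,a)$) or the three vertices form a segment of the form $(q, q+x, q)$ so that two terms cancel by $\phi(a,b)=-\phi(b,a)$. In each case the exponent reduces to $0$ and the exponential is $1$. Each identity is a one-line check.

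For the cocycle identity (i), I expand both sides at an arbitrary $q\in\Z^d$ using \eqref{def_omega} together with the definition \eqref{def_theta} of $\theta_x$; each side becomes $\exp(iS_q)$ where $S_q$ is a sum of six $\phi$-terms evaluated at vertices drawn from $\{q, q+x, q+x+y, q+x+y+z\}$. Four of these six terms match directly between the two sides, and the remaining discrepancy groups into two pairs of the shape $\phi(a,b)+\phi(b,a)$ (one coming from the edge $\{q, q+x+y\}$ on the left, one from the edge $\{q+x, q+x+y+z\}$ on the right), each of which vanishes by \eqref{eq_phi}. Hence the exponents coincide and (i) follows.

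There is no conceptual obstacle; the proof is pure bookkeeping in which antisymmetry of $\phi$ cancels precisely the right terms. The underlying picture is that $\omega(q;x,y)$ is the exponential of the phase accumulated along the oriented boundary of the triangle with vertices $q, q+x, q+x+y$; the cocycle identity (i) then expresses the fact that the four triangular faces of the tetrahedron with vertices $q, q+x, q+x+y, q+x+y+z$ contribute boundary phases whose sum is zero when the faces are consistently oriented, while (ii) and (iii) correspond to degenerate triangles.
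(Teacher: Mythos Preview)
Your proposal is correct and follows essentially the same computational route as the paper: expand both sides of (i) at a point $q$, match four of the six $\phi$-terms, and observe that the leftover pairs $\phi(q,q+x+y)+\phi(q+x+y,q)$ on the left and $\phi(q+x+y+z,q+x)+\phi(q+x,q+x+y+z)$ on the right each vanish by antisymmetry, while (ii) and (iii) are handled by the degenerate-triangle observation using $\phi(a,a)=0$. The tetrahedral interpretation you add is a pleasant bonus not present in the paper but changes nothing in the argument.
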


\begin{proof}
The proof consists only in simple computations. Indeed by taking \eqref{eq_phi} into account one gets that for any $q,x,y,z\in \Z^d$
\begin{align*}
&[\omega(x+y,z)](q)\;[\omega(x,y)](q) \\
& =\omega(q;x+y,z)\;\omega(q;x,y) \\
& = \exp\big\{i\big[\phi(q,q+x+y)+\phi(q+x+y,q+x+y+z)+\phi(q+x+y+z,q)\big]\big\}  \\
& \quad \ \exp\big\{i\big[\phi(q,q+x)+\phi(q+x,q+x+y)+\phi(q+x+y,q)\big]\big\} \\
& =  \exp\big\{i\big[\phi(q+x,q+x+y)+\phi(q+x+y,q+x+y+z)+\phi(q+x+y+z,q+x)\big]\big\}  \\
& \quad \ \exp\big\{i\big[\phi(q,q+x)+\phi(q+x,q+x+y+z)+\phi(q+x+y+z,q)\big]\big\} \\
& =\omega(q+x;y,z)\;\omega(q;x,y+z) \\
& = [\theta_x\omega(y,z)](q)\;[\omega(x,y+z)](q)
\end{align*}
which proves (i). Similar computations lead to (ii) and (iii) once
the equality $\phi(x,x)=0$ for any $x\in \Z^d$ is taken into account.
\end{proof}

Let us now make some comments about the previous definitions and results.
For fixed $x,y$ the map $\omega(x,y):\Z^d\to \T$ can been seen as an element of the unitary group of the algebra $l^\infty(\Z^d)$.
For simplicity, we set $\U(\Z^d)$ for this unitary group, \ie
$$
\U(\Z^d)=\{f:\Z^d\to \T\}.
$$
In addition, property (i) of the previous lemma is usually considered as a \emph{$2$-cocycle property}
while property (ii) corresponds to a normalization of this $2$-cocycle.
In the second part of this section, we shall come back to these definitions.
For the time being, let us just mention that this $2$-cocycle will be at the root of the definition of a
twisted crossed product $C^*$-algebra.
However, before recalling the details of this construction, let us still show that $\omega$ depends only
on equivalent classes of magnetic potentials.

\begin{Lemma}\label{lem_equi_class}
Let $\phi$ be a magnetic potential and let $\varphi:\Z^d\to \R$. Then the map $\phi':\Z^d\times \Z^d\to \R$
defined by
$$
\phi'(x,y)=\phi(x,y)+\varphi(y)-\varphi(x).
$$
is a magnetic potential. In addition, by formula \eqref{def_omega} the two magnetic potentials $\phi$ and $\phi'$
define the same $2$-cocycle.
\end{Lemma}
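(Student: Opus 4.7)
The plan is to verify the two assertions by direct computation; both reduce to telescoping the $\varphi$-terms.

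For the first claim, I would just add $\phi'(x,y)$ and $\phi'(y,x)$: the $\phi$-contribution vanishes by antisymmetry of $\phi$, while the $\varphi$-contribution is $\bigl(\varphi(y)-\varphi(x)\bigr)+\bigl(\varphi(x)-\varphi(y)\bigr)=0$. So $\phi'(x,y)=-\phi'(y,x)$ and $\phi'$ is a magnetic potential.

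For the second claim, I would substitute the definition of $\phi'$ into the bracketed exponent in \eqref{def_omega} evaluated at $\phi'$. Beyond the $\phi$-sum $\phi(q,q+x)+\phi(q+x,q+x+y)+\phi(q+x+y,q)$, one collects the six $\varphi$-terms
$$
\bigl[\varphi(q+x)-\varphi(q)\bigr]+\bigl[\varphi(q+x+y)-\varphi(q+x)\bigr]+\bigl[\varphi(q)-\varphi(q+x+y)\bigr],
$$
which telescope to zero around the closed triangle $q\to q+x\to q+x+y\to q$. Therefore the exponents defining $\omega'(q;x,y)$ and $\omega(q;x,y)$ agree for every $q,x,y\in\Z^d$, and the two $2$-cocycles coincide.

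I do not expect any real obstacle, since everything is algebraic; the only conceptual point worth noting is that the transformation $\phi\mapsto\phi'$ is a gauge transformation (addition of a discrete coboundary $\varphi(y)-\varphi(x)$), and $\omega(q;x,y)$ encodes the magnetic flux through the triangle $(q,q+x,q+x+y)$, which is manifestly gauge-invariant. This is precisely what makes the $\varphi$-contribution telescope.
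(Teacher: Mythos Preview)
Your proof is correct and follows essentially the same direct computation as the paper: verify antisymmetry of $\phi'$, then substitute $\phi'$ into \eqref{def_omega} and observe that the $\varphi$-terms telescope around the triangle. The only addition is your closing remark interpreting the cancellation as gauge invariance of the flux, which is accurate but not part of the paper's argument.
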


\begin{proof}
Clearly, $\phi'(x,y)=-\phi'(y,x)$ which means that $\phi'$ is a magnetic potential.
If we denote by $\omega$ (resp. $\omega'$) the $2$-cocycle defined by \eqref{def_omega} for the magnetic potential $\phi$ (resp. $\phi'$) we get
\begin{align*}
\omega'(q;x,y)&:=\exp\big\{i\big[\phi'(q,q+x)+\phi'(q+x,q+x+y)+\phi'(q+x+y,q)\big]\big\} \\
& = \exp\big\{i\big[\phi(q,q+x)+\varphi(q+x)-\varphi(q)+\phi(q+x,q+x+y)+ \varphi(q+x+y)-\varphi(q+x) \\
& \quad \ +\phi(q+x+y,q)+\varphi(q)-\varphi(q+x+y)\big]\big\} \\
&  = \exp\big\{i\big[\phi(q,q+x)+\phi(q+x,q+x+y)+\phi(q+x+y,q)\big]\big\} \\
& = \omega(q;x,y).\qedhere
\end{align*}
\end{proof}

One could argue that the $2$-cocycle $\omega$ depends only
on the \emph{magnetic field} as introduced in \cite{CTT}, and not on the choice of a magnetic potential. However, this would lead us too far from our purpose
since we would have to consider $\Z^d$ as a graph endowed with edges between every pair of vertices.

\subsection{Twisted crossed product algebras and their representations}\label{subsec_abs}

Let us adopt a very pragmatic point of view and recall only the strictly necessary information on twisted
crossed product $C^*$-algebras. More can be found in the fundamental papers \cite{PR1,PR2} or in the review paper \cite{MPR}.
Since the group we are dealing with is simply $\Z^d$, most of the necessary information can also be found in \cite{Zel}.

Consider the group $\Z^d$ and the algebra $l^\infty(\Z^d)$ endowed with the action $\theta$ of $\Z^d$ by translations, as defined in \eqref{def_theta}.
As suggested by the notation, the vector space $l^1\big(\Z^d;l^\infty(\Z^d)\big)$ is endowed with the following norm
\begin{equation}\label{eq_norm}
\|f\|_{1,\infty}:=\sum_{x\in \Z^d}\sup_{q\in \Z^d}|f(q;x)| \qquad f\in l^1\big(\Z^d;l^\infty(\Z^d)\big),
\end{equation}
where $x$ is the variable in the $l^1$-part and $q$ is the variable in the $l^\infty$-part.
This set also admits an action of $\Z^d$ defined for any $f\in l^1\big(\Z^d;l^\infty(\Z^d)\big)$ by
$$
[\theta_y f(x)](q):=[f(\cdot +y;x)](q) = f(q+y;x).
$$

In order to endow $l^1\big(\Z^d;l^\infty(\Z^d)\big)$ with a twisted product,
let $\omega$ be any normalized $2$-cocycle on $\Z^d$ with values in the unitary group of $l^\infty(\Z^d)$,
or in other words let $\omega:\Z^d\times \Z^d\to \U(\Z^d)$ satisfy for any $x,y,z\in \Z^d$:
\begin{equation}\label{eq_2}
\omega(x+y,z)\;\omega(x,y) = \theta_x\omega(y,z)\;\omega(x,y+z)
\end{equation}
and
\begin{equation}\label{eq_n}
\omega(x,0)=\omega(0,x)=1.
\end{equation}
Because of the point (iii) of Lemma \ref{lem_2_cocycle}, we shall also assume that the $2$-cocycle $\omega$
satisfies an additional property, namely for any $x\in \Z^d$:
\begin{equation}\label{eq_add}
\omega(x,-x)=1.
\end{equation}

We can now define the twisted product and an involution: for any $f,g\in l^1\big(\Z^d;l^\infty(\Z^d)\big)$
one sets
\begin{equation}\label{eq_produit}
[f\diamond g](x):=\sum_{y\in \Z^d} f(y)\; \theta_y g(x-y) \;\omega(y,x-y)
\end{equation}
and
\begin{equation}\label{eq_involution}
f^{\diamond}(x)= [\theta_x f(-x)]^*=\overline{f(\cdot+x;-x)}.
\end{equation}
Both operations are continuous with respect to the norm introduced in \eqref{eq_norm}.

The enveloping $C^*$-algebra of $l^1\big(\Z^d;l^\infty(\Z^d)\big)$, endowed with the above product and involution,
will be denoted by $\CC(\omega)\equiv\CC$. Recall that this algebra corresponds to the completion of $l^1\big(\Z^d;l^\infty(\Z^d)\big)$
with respect to the $C^*$-norm defined as the supremum over all the faithful representations of $l^1\big(\Z^d;l^\infty(\Z^d)\big)$.
As a consequence, $l^1\big(\Z^d;l^\infty(\Z^d)\big)$
is dense in $\CC$ and the new $C^*$-norm $\|\cdot\|$ satisfies $\|f\|\leq \|f\|_{1,\infty}$.

\begin{Remark}
In \cite{MPR} an additional ingredient is introduced in the previous construction, namely an endomorphism $\tau$ of $\Z^d$.
In the continuous case, when $\Z^d$ is replaced by $\R^d$, this additional degree of freedom allows one
to encompass in a single framework the formulas for the Weyl quantization and for the Kohn-Nirenberg quantization.
In the discrete setting, we stick to the case $\tau=0$ since the other choices
do not seem to be relevant.
\end{Remark}

Let us now look at a faithful representation of the algebra $\CC$ in the Hilbert space $\H=l^2(\Z^d)$.
First of all, by \cite[Lem.~2.9]{MPR} there always exists a \emph{$1$-cochain} $\lambda$, \ie~a map $\lambda: \Z^d\to \U(\Z^d)$, such that
\begin{equation}\label{eq_1_cochain}
\lambda(x)\; \theta_x\lambda(y)\; \lambda(x+y)^{-1} = \omega(x,y).
\end{equation}
In fact, an example of such a $1$-cochain can be defined by the following formula:
\begin{equation}\label{eq_lambda_t}
\lambda_t(q;x)\equiv[\lambda_t(x)](q):=\omega(0;q,x).
\end{equation}
Indeed, it easily follows from the $2$-cocycle property \eqref{eq_2} that
\begin{align*}
\lambda_t(q;x)\;\!\lambda_t(q+x;y)\;\!\lambda_t(q;x+y)^{-1}
& = \omega(0;q,x)\;\!\omega(0;q+x,y)\;\! \omega(0;q,x+y)^{-1} \\
&= \theta_q\omega(0;x,y)\\
& = \omega(q;x,y).
\end{align*}
Note that in the continuous case this choice corresponds to the transversal gauge for the magnetic potential, and this is why the index $t$ has been added.

Since the $2$-cocycle $\omega$ has been chosen normalized and with the additional property \eqref{eq_add}, the
$1$-cochains satisfying \eqref{eq_1_cochain} also share some additional properties, namely:

\begin{Lemma}\label{lem_sur_lambda}
Let $\lambda$ be a $1$-cochain satisfying \eqref{eq_1_cochain} for $\omega$ satisfying \eqref{eq_2}-\eqref{eq_add}. Then,
\begin{enumerate}
\item[(i)] $\lambda(q;0)=1$ for any $q\in \Z^d$,
\item[(ii)] $\lambda (y;x-y)=\lambda(x;y-x)^{-1}$ for any $x,y\in \Z^d$.
\end{enumerate}
\end{Lemma}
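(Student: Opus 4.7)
The plan is to extract both statements directly from the defining relation
\[
\lambda(x)\;\theta_x\lambda(y)\;\lambda(x+y)^{-1} = \omega(x,y)
\]
by well-chosen substitutions, exploiting the fact that the unitary group $\U(\Z^d)$ of the commutative algebra $l^\infty(\Z^d)$ is itself commutative, so there are no ordering issues.

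For part (i), I would simply set $y=0$ (or $x=0$) in the defining relation and use the normalization $\omega(x,0)=1$ coming from \eqref{eq_n}. With $x=0$ one obtains $\lambda(0)\;\!\theta_0\lambda(y)\;\!\lambda(y)^{-1}=\omega(0,y)=1$, and since $\theta_0$ is the identity this collapses to $\lambda(0)=1$ as an element of $\U(\Z^d)$, which is exactly $\lambda(q;0)=1$ for every $q\in\Z^d$.

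For part (ii), the key substitution is $y=-x$. Since $\omega(x,-x)=1$ by the extra property \eqref{eq_add}, and since $\lambda(0)=1$ by part (i), the defining relation becomes
\[
\lambda(x)\;\theta_x\lambda(-x) = 1,
\]
i.e.\ $\theta_x\lambda(-x)=\lambda(x)^{-1}$ in $\U(\Z^d)$. Evaluating at an arbitrary point $q\in\Z^d$ gives the pointwise identity $\lambda(q+x;-x)=\lambda(q;x)^{-1}$. To reach the formulation in the lemma, I reparametrize: given $x,y\in\Z^d$, apply this identity with $q$ replaced by $y$ and $x$ replaced by $x-y$, so that $-x$ becomes $y-x$. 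This yields
\[
\lambda(x;y-x) = \lambda\bigl(y+(x-y);-(x-y)\bigr) = \lambda(y;x-y)^{-1},
\]
which is (ii).

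There is no real obstacle here; the only point that needs a moment of care is making sure that, even though $\theta_x$ acts non-trivially on $\lambda(-x)$, the equations can be read off pointwise because the target group $\U(\Z^d)$ is abelian. Once the substitutions $y=0$ and $y=-x$ are chosen, the rest is just rewriting the argument of $\lambda$.
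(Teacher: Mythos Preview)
Your proof is correct and follows essentially the same approach as the paper: for (i) you use $x=0$ while the paper uses $y=0$, but both substitutions work identically in the abelian group $\U(\Z^d)$; for (ii) both you and the paper use the substitution $y=-x$ together with part (i), then reparametrize via $q\mapsto y$, $x\mapsto x-y$ to reach the stated form.
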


\begin{proof}
One infers from \eqref{eq_1_cochain} for $y=0$ and from \eqref{eq_n} that
$$
\lambda(q;x)\; \lambda(q+x;0)\; \lambda(q;x)^{-1} = \lambda(q+x;0) = \omega(q;x,0)=1.
$$
Similarly, from \eqref{eq_1_cochain} for $y=-x$ and from \eqref{eq_add} one gets that
\begin{equation*}
\lambda(x)\;\!\theta_x\lambda(-x)\;\!\lambda(0)=\omega(x,-x)=1,
\end{equation*}
from which one deduces that $\lambda(q;x) = \lambda(q+x;-x)^{-1}$. Finally, by replacing $q$ by $y$ and $x$ by $x-y$ in the previous equality
one deduces the statement.
\end{proof}

Once a $1$-cochain satisfying \eqref{eq_1_cochain} has been chosen, a representation of $\CC$ in $\H$ can be defined, as shown in  \cite[Sec.~2.4]{MPR}.
More precisely, for any $h\in l^1\big(\Z^d;l^\infty(\Z^d)\big)$, any $u\in \H$ and any $x\in \Z^d$ one sets
\begin{equation*}
[\Rep^\lambda(h)u](x) := \sum_{y\in \Z^d} h(x;y-x) \;\!\lambda(x;y-x) \;\!u(y).
\end{equation*}
The main properties of this representation are gathered in the following statement, which corresponds to \cite[Prop.~2.16 \& 2.17]{MPR}
adapted to our setting. In (i) the operator $\varphi(X)$ denotes the operator of multiplication by the function $\varphi$.

\begin{Proposition}\label{prop_MPR}
Let $\lambda$ and $\lambda'$ be two $1$-cochains satisfying \eqref{eq_1_cochain} for the same $\omega$ that satisfies \eqref{eq_2}-\eqref{eq_add}. Then,
\begin{enumerate}
\item[(i)] There exists $\varphi:\Z^d\to \R$ such that
$$\lambda'(q;x)=\e^{i\theta_x \varphi(q)}\e^{-i\varphi(q)}\lambda(q;x) .$$
In addition one has for any  $h\in l^1\big(\Z^d;l^\infty(\Z^d)\big)$
$$
\Rep^{\lambda'}(h) = \e^{-i\varphi(X)}\;\! \Rep^\lambda(h)\;\!\e^{i\varphi(X)},
$$
\item[(ii)] The representation $\Rep^\lambda$ is irreducible,
\item[(iii)] The representation $\Rep^\lambda$ is faithful.
\end{enumerate}
\end{Proposition}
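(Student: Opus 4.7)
The plan is to treat the three assertions of the proposition in order, using the abelian structure of $\U(\Z^d)$ for (i), a covariance-type argument for (ii), and a dense-subalgebra/amenability argument for (iii).

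For (i), I would introduce the ratio $\mu(x):=\lambda'(x)\,\lambda(x)^{-1}\in\U(\Z^d)$. Since $\U(\Z^d)$ is abelian and both $\lambda$ and $\lambda'$ satisfy \eqref{eq_1_cochain} for the same $\omega$, dividing the two equations gives the \emph{1-cocycle identity}
\begin{equation*}
\mu(x+y)=\mu(x)\,\theta_x\mu(y),\qquad x,y\in\Z^d.
\end{equation*}
The next step is to produce a 1-coboundary trivialisation. Setting $\nu(q):=[\mu(q)](0)\in\T$ and evaluating the cocycle identity at the point $q=0$, one recovers
$[\mu(x)](q)=\nu(q+x)\,\nu(q)^{-1}$, so writing $\nu=\e^{i\varphi}$ with $\varphi:\Z^d\to\R$ yields the stated formula relating $\lambda'$ and $\lambda$. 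The intertwining relation between $\Rep^{\lambda'}$ and $\Rep^\lambda$ is then a direct one-line computation: conjugating $\Rep^\lambda(h)$ by $\e^{i\varphi(X)}$ introduces precisely the factor $\e^{i\theta_{y-x}\varphi(x)}\e^{-i\varphi(x)}$ inside the sum defining $\Rep^\lambda(h)$.

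For (ii), I would exhibit two families of operators already lying in $\Rep^\lambda(\CC)$. Taking $h(q;x)=f(q)\delta_{x,0}$ with $f\in l^\infty(\Z^d)$ and using Lemma~\ref{lem_sur_lambda}(i), I get $\Rep^\lambda(h)=f(X)$, so every bounded multiplication operator belongs to the image. Taking $h(q;x)=g(q)\delta_{x,x_0}$ gives twisted shifts of the form $[Tu](x)=g(x)\lambda(x;x_0)u(x+x_0)$, in particular the unitary-up-to-phase operators $T_{x_0}u(x)=\lambda(x;x_0)u(x+x_0)$. An operator $B$ in the commutant of $\Rep^\lambda(\CC)$ must then commute with every $f(X)$, which forces $B=g(X)$ by maximal abelianness of $l^\infty(\Z^d)$ acting on $l^2(\Z^d)$; commutation with each $T_{x_0}$ forces $g(x)=g(x+x_0)$ for every $x_0\in\Z^d$, so $g$ is constant. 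Hence $\Rep^\lambda(\CC)'=\C\cdot 1$.

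For (iii), the easy half is that the matrix elements
\begin{equation*}
\langle\delta_x,\Rep^\lambda(h)\delta_y\rangle=h(x;y-x)\,\lambda(x;y-x)
\end{equation*}
have modulus $|h(x;y-x)|$, so $\Rep^\lambda$ is already injective on the dense $*$-subalgebra $l^1\big(\Z^d;l^\infty(\Z^d)\big)$. The real obstacle, and the main substance of the argument, is upgrading this to injectivity on the full enveloping $C^*$-algebra $\CC$: one must show that $\Rep^\lambda$ is isometric rather than merely injective on a dense subalgebra. The crucial ingredient here is amenability of $\Z^d$, which for twisted $C^*$-dynamical systems forces the full and reduced crossed products to coincide. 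Since $\Rep^\lambda$ is built from the covariant pair $(f\mapsto f(X),\,x_0\mapsto T_{x_0})$ it coincides (after the unitary equivalence established in (i)) with a regular-type representation of the reduced twisted crossed product, which is always faithful. This abstract step, rather than any explicit computation, is where I expect the proof to rely on the machinery developed in \cite{PR1,PR2,MPR}.
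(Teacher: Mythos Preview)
The paper does not actually prove this proposition: it simply records that the statement ``corresponds to \cite[Prop.~2.16 \& 2.17]{MPR} adapted to our setting'' and gives no argument of its own. Your proposal therefore goes well beyond what the paper provides, and the arguments you give for (i) and (ii) are correct and complete. The coboundary trick for (i) via $\mu(x)=\lambda'(x)\lambda(x)^{-1}$ and the commutant computation for (ii) are exactly the standard proofs behind the cited reference.

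For (iii) your diagnosis is right: injectivity on $l^1\big(\Z^d;l^\infty(\Z^d)\big)$ is easy from the matrix elements, the real content is the isometry on the $C^*$-completion, and amenability of $\Z^d$ (hence coincidence of full and reduced twisted crossed products) is indeed the crucial input. One small inaccuracy is worth flagging: $\Rep^\lambda$ is not literally the regular representation of the reduced crossed product, which would live on $l^2(\Z^d;H)$ with $H$ carrying a \emph{faithful} representation of $l^\infty(\Z^d)$. Rather, $\Rep^\lambda$ is (up to the unitary equivalence of (i)) the representation induced from the single character $f\mapsto f(0)$ of $l^\infty(\Z^d)$. The passage to faithfulness uses that translating the base point over all of $\Z^d$ produces a family of mutually unitarily equivalent representations whose direct sum \emph{is} the regular representation induced from the faithful multiplication representation of $l^\infty(\Z^d)$ on $l^2(\Z^d)$; since a direct sum of copies of $\Rep^\lambda$ has the same kernel as $\Rep^\lambda$ itself, faithfulness follows. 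This is precisely the mechanism in \cite{MPR}, so your expectation that the machinery of \cite{PR1,PR2,MPR} carries the weight here is correct.
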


Let us end this abstract part with a result about self-adjointness. The following statement shows that
if $h\in l^1\big(\Z^d;l^\infty(\Z^d)\big)$ satisfies $h^\diamond=h$, with the involution defined in \eqref{eq_involution},
then the corresponding operator $\Rep^\lambda(h)$ is self-adjoint.

\begin{Lemma}
Let $\lambda$ be any $1$-cochain satisfying \eqref{eq_1_cochain} with $\omega$ satisfying \eqref{eq_2}-\eqref{eq_add},
and let $h\in l^1\big(\Z^d;l^\infty(\Z^d)\big)$. Then $\Rep^\lambda(h)$ is self-adjoint if $h^\diamond=h$.
\end{Lemma}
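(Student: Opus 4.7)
The strategy is to prove the stronger statement that $\Rep^\lambda$ intertwines the involution $\diamond$ with the Hilbert space adjoint, i.e.\ $\Rep^\lambda(h^\diamond)=\Rep^\lambda(h)^*$ for every $h\in l^1\bigl(\Z^d;l^\infty(\Z^d)\bigr)$. Once this identity is established, the hypothesis $h^\diamond=h$ immediately yields $\Rep^\lambda(h)^*=\Rep^\lambda(h)$. As a preliminary step I would note that $\Rep^\lambda(h)$ is a bounded operator on $\H$: since $|\lambda(x;y-x)|=1$ for every $x,y\in \Z^d$, the kernel $K(x,y):=h(x;y-x)\;\!\lambda(x;y-x)$ satisfies $\sum_y|K(x,y)|\leq \|h\|_{1,\infty}$ uniformly in $x$, and similarly $\sum_x|K(x,y)|\leq \|h\|_{1,\infty}$ uniformly in $y$; a direct application of Schur's test then gives boundedness of $\Rep^\lambda(h)$.

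The core of the proof is a direct kernel computation. For $u,v\in \H$ of finite support one writes
\begin{equation*}
\langle \Rep^\lambda(h)u,v\rangle=\sum_{x,y\in \Z^d}\overline{v(x)}\;\!h(x;y-x)\;\!\lambda(x;y-x)\;\!u(y),
\end{equation*}
and one wants to match this with $\langle u,\Rep^\lambda(h^\diamond)v\rangle$. Using the definition \eqref{eq_involution} of the involution one computes
\begin{equation*}
h^\diamond(y;x-y)=\overline{h\bigl(y+(x-y);-(x-y)\bigr)}=\overline{h(x;y-x)},
\end{equation*}
so that the conjugate of the kernel of $\Rep^\lambda(h^\diamond)$ at $(y,x)$ reads $h(x;y-x)\;\!\overline{\lambda(y;x-y)}$. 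Thus the identity $\Rep^\lambda(h^\diamond)=\Rep^\lambda(h)^*$ reduces to the single pointwise equality
\begin{equation*}
\lambda(x;y-x)=\overline{\lambda(y;x-y)}\qquad\text{for all }x,y\in \Z^d,
\end{equation*}
which, because $\lambda$ takes values in $\T$, is exactly the statement $\lambda(y;x-y)=\lambda(x;y-x)^{-1}$ proved in Lemma \ref{lem_sur_lambda}(ii). This in turn depended on the additional property \eqref{eq_add} of $\omega$, which is why that extra normalisation was imposed on our $2$-cocycles.

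The only genuine subtlety in this plan is the bookkeeping of variables when comparing the kernel of $\Rep^\lambda(h)^*$ with that of $\Rep^\lambda(h^\diamond)$: both operators use the first argument of $h$ (and of $\lambda$) as the base point of the sum while shifting the second argument according to the difference of indices, and one must not conflate these conventions. Everything else is a direct reading of the definitions together with Lemma \ref{lem_sur_lambda}(ii), so I expect no substantial obstacle beyond this careful rearrangement.
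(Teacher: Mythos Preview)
Your proposal is correct and follows essentially the same route as the paper: both arguments reduce to the kernel identity $\overline{h(x;y-x)\lambda(x;y-x)}=h^\diamond(y;x-y)\lambda(y;x-y)$, obtained from the definition of the involution together with Lemma~\ref{lem_sur_lambda}(ii). The only differences are cosmetic: you state the slightly stronger conclusion $\Rep^\lambda(h^\diamond)=\Rep^\lambda(h)^*$ and add an explicit Schur-test boundedness check, while the paper works directly with $\langle v,\Rep^\lambda(h)u\rangle$ under the assumption $h^\diamond=h$.
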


\begin{proof}
Let $u,v$ be elements of the Hilbert space $l^2(\Z^d)$ with compact support, let $\langle \cdot,\cdot\rangle$ denote its scalar product
and let $\langle\cdot,\cdot\rangle_\C$ denote the scalar product in $\C$.
Let us also observe that with a simple change of variables the equality $h^\diamond = h$ is equivalent to $h(y;x-y)=\overline{h(x;y-x)}$.
Then by taking Lemma \ref{lem_sur_lambda}.(ii) into account one gets
\begin{align*}
\big\langle v, \Rep^\lambda(h)u\big\rangle
= & \sum_{x\in \Z^d} \Big\langle v(x), \sum_{y\in \Z^d} h(x;y-x) \;\!\lambda(x;y-x) \;\!u(y)\Big\rangle_\C \\
= &  \sum_{y\in \Z^d}  \Big\langle  \sum_{x\in \Z^d} \overline{h(x;y-x)} \;\!\overline{\lambda(x;y-x)}\;\!v(x), u(y)\Big\rangle_\C \\
= &  \sum_{x\in \Z^d}  \Big\langle  \sum_{y\in \Z^d} h(x;y-x) \;\!\lambda(x;y-x)\;\!v(y), u(x)\Big\rangle_\C \\
= & \big\langle \Rep^\lambda(h)v,u\big\rangle.\qedhere
\end{align*}
\end{proof}

\subsection{Back to magnetic systems}

Let us now come back to a magnetic potential $\phi$ and to the \emph{magnetic $2$-cocycle} $\omega$ defined by \eqref{def_omega}.
By Lemma \ref{lem_2_cocycle}, the three conditions \eqref{eq_2}-\eqref{eq_add} are satisfied for such a $2$-cocycle, and thus
the construction of Section \ref{subsec_abs} is at hand. Let us thus list some relations between this abstract section and
some magnetic objects considered before.

First of all, the relation between $\lambda_t$ introduced in \eqref{eq_lambda_t} and $\phi$ can be explicitly computed, namely
\begin{align}\label{eq_lambda_phi}
\nonumber \lambda_t(q;x) & = \omega(0;q,x) \\
\nonumber & = \exp\big\{i\big[\phi(0,q)+\phi(q,q+x)+\phi(q+x,0)\big]\big\} \\
\nonumber & = \exp\big\{i\big[\phi(q,q+x)+ \phi(q+x,0)-\phi(q,0)\big]\big\} \\
& = \exp\big\{i\big[\phi(q,q+x)+ \varphi(q+x)-\varphi(q)\big]\big\}
\end{align}
with $\varphi:\Z^d\to \R$ defined by $\varphi(x):=\phi(x,0)$.
On the other hand, the obvious choice
\begin{equation}\label{eq_obvious}
\lambda_\phi(q;x):=\e^{i\phi(q,q+x)}
\end{equation}
is also a $1$-cochain satisfying \eqref{eq_1_cochain}, as a consequence of \eqref{def_omega} and $\phi(x,y)=-\phi(y,x)$.

At the level of the representations, for the $1$-cochain $\lambda_\phi$ one gets
\begin{equation}\label{eq_Rep_phi}
[\Rep^{\lambda_\phi}(h)u](x) = \sum_{y\in \Z^d} h(x;y-x) \;\!\e^{i\phi(x,y)} \;\!u(y).
\end{equation}
Clearly, this expression corresponds to the one provided in \eqref{eq_def_H} which
was the starting point of our investigations.
It is precisely the equality of these two expressions which makes the algebraic formalism useful
for the study of magnetic operators.

On the other hand for the $1$-cochain $\lambda_t$ and if \eqref{eq_lambda_phi} is taken into account one obtains
\begin{align*}
[\Rep^{\lambda_t}(h)u](x) & = \sum_{y\in \Z^d} h(x;y-x) \;\!\exp\{i\phi(x,y) + \varphi(y)-\varphi(x)\} \;\!u(y) \\
& = \e^{-i\varphi(x)} \sum_{y\in \Z^d} h(x;y-x) \;\!\e^{i\phi(x,y)}\;\!\e^{i\varphi(y)} \;\!u(y) \\
& = \big[\e^{-i\varphi(X)} \Rep^{\lambda_\phi}(h) \e^{i\varphi(X)}u\big](x).
\end{align*}
These equalities mean that the representations provided by $\Rep^{\lambda_\phi}$ and $\Rep^{\lambda_t}$ are unitarily equivalent,
as it could already be inferred from Proposition \ref{prop_MPR}.(i).

In summary, any magnetic potential defines a magnetic $2$-cocycle, and subsequently a twisted crossed product $C^*$-algebra
which can be represented faithfully in $\H$. This algebra depends on an equivalence class of magnetic potentials, as emphasized
in Lemma \ref{lem_equi_class}. Reciprocally, any normalized $2$-cocycle on $\Z^d$ with values in $\U(\Z^d)$ and which satisfies
the additional relation \eqref{eq_add} comes from a magnetic potential, as shown in the following lemma.

\begin{Lemma}
Let $\omega:\Z^d\times\Z^d\to \U(\Z^d)$ satisfy conditions \eqref{eq_2}-\eqref{eq_add}.
Then there exists a magnetic potential which satisfies the relation \eqref{def_omega}.
\end{Lemma}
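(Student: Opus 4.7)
My plan is to use the canonical $1$-cochain $\lambda_t$ defined in \eqref{eq_lambda_t} to produce $\phi$, essentially by taking real-valued logarithms. Since $\omega$ satisfies \eqref{eq_2}--\eqref{eq_add}, Lemma \ref{lem_sur_lambda} applies, so $\lambda_t(q;0)=1$ and $\lambda_t(y;x-y)=\lambda_t(x;y-x)^{-1}$ for all $q,x,y\in\Z^d$.

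\medskip

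First I would fix any total order $\prec$ on $\Z^d$. For each ordered pair $(x,y)\in\Z^d\times\Z^d$ with $x\prec y$, pick $\phi(x,y)\in\R$ such that $\e^{i\phi(x,y)}=\lambda_t(x;y-x)$; this is possible since $\lambda_t(x;y-x)\in\T$. Then extend by setting $\phi(y,x):=-\phi(x,y)$ whenever $x\prec y$, and $\phi(x,x):=0$. By construction $\phi(x,y)=-\phi(y,x)$ for all $x,y$, so $\phi$ is a magnetic potential. The key point for consistency is that the equality $\e^{i\phi(x,y)}=\lambda_t(x;y-x)$ automatically propagates to the pairs with $y\prec x$: from Lemma \ref{lem_sur_lambda}.(ii) one has
$$
\e^{i\phi(y,x)}=\e^{-i\phi(x,y)}=\lambda_t(x;y-x)^{-1}=\lambda_t(y;x-y),
$$
and it also holds on the diagonal since $\phi(x,x)=0$ and $\lambda_t(x;0)=1$. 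Thus $\e^{i\phi(a,b)}=\lambda_t(a;b-a)$ holds for \emph{all} $a,b\in\Z^d$.

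\medskip

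Now I verify that this $\phi$ reproduces $\omega$ through \eqref{def_omega}. Using the identity just derived, one computes
\begin{align*}
\exp\bigl\{i[\phi(q,q+x)+\phi(q+x,q+x+y)+\phi(q+x+y,q)]\bigr\}
&=\lambda_t(q;x)\,\lambda_t(q+x;y)\,\lambda_t(q+x+y;-x-y)\\
&=\lambda_t(q;x)\,\lambda_t(q+x;y)\,\lambda_t(q;x+y)^{-1},
\end{align*}
where the last equality again uses Lemma \ref{lem_sur_lambda}.(ii). But this is precisely the defining relation \eqref{eq_1_cochain} for the $1$-cochain $\lambda_t$, which equals $\omega(q;x,y)$. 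So the required identity \eqref{def_omega} holds.

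\medskip

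The only subtle point in this argument is the branch choice for the logarithm: naively setting $\phi(x,y):=-i\log\lambda_t(x;y-x)$ with a fixed branch of $\log$ would not automatically give antisymmetry, because when $\lambda_t(x;y-x)=-1$ both branches $\pm\pi$ are a priori possible. I circumvent this obstacle by defining $\phi$ only on half of the off-diagonal pairs (via the total order) and then \emph{imposing} antisymmetry by hand; Lemma \ref{lem_sur_lambda}.(ii) guarantees that the exponential identity needed in \eqref{def_omega} survives this asymmetric choice. Apart from that, the proof is entirely an unfolding of definitions.
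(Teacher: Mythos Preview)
Your proof is correct and follows essentially the same route as the paper: both define $\phi(x,y)$ as a real logarithm of $\omega(0;x,y-x)$ (which is your $\lambda_t(x;y-x)$), resolve the branch ambiguity at $-1$ via an ordering, and then verify \eqref{def_omega} using the $2$-cocycle identity. The only cosmetic differences are that the paper derives the antisymmetry relation $\omega(0;x,y-x)=\omega(0;y,x-y)^{-1}$ directly from \eqref{eq_2}--\eqref{eq_add} rather than quoting Lemma~\ref{lem_sur_lambda}, and it uses the principal branch on $(-\pi,\pi)$ for generic values before invoking the order, whereas you use the order from the outset.
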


\begin{proof}
First of all, observe that the equality
\begin{equation}\label{eq_rel_1}
\omega(x,y)=\omega(x+y,-y)^{-1}
\end{equation}
is a direct consequence of \eqref{eq_2}-\eqref{eq_add}
taking $z=-y$ in \eqref{eq_2}.

For any $x,y\in \Z^d$ with $\omega(0;x,y-x)\neq -1$ let us set $\phi(x,y)\in (-\pi,\pi)$ by
$$
\e^{i \phi(x,y)}:=\omega(0;x,y-x).
$$
By \eqref{eq_rel_1} one infers that
$$
\e^{i\phi(y,x)}=\omega(0;y,x-y) = \omega(0;x,y-x)^{-1} = \big(\e^{i\phi(x,y)}\big)^{-1} = \e^{-i\phi(x,y)}
$$
which means that $\phi(x,y)=-\phi(y,x)$. If $\omega(0;x,y-x)= -1$, then one sets $\phi(x,y):=-\pi$ if $x<y$ (lexicographic order on $\Z^d$)
while $\phi(x,y):=\pi$ if $y<x$. With this convention and by the same argument one obtains that  $\phi(x,y)=-\phi(y,x)$ which is thus
proved for any $x,y\in \Z^d$. As a consequence, $\phi$ is indeed a magnetic potential.

In order to show \eqref{def_omega}, it is enough to observe that
\begin{align*}
& \exp\big\{i\big[\phi(q,q+x)+\phi(q+x,q+x+y)+\phi(q+x+y,q)\big]\big\} \\
& = \omega(0;q,x)\;\!\omega(0;q+x,y)\;\!\omega(0;q+x+y,-x-y) \\
& = \omega(0;q,x)\;\!\omega(0;q+x,y)\;\!\omega(0;q,x+y)^{-1} \\
& = \omega(q;x,y)
\end{align*}
where \eqref{eq_rel_1} has again been used for the second equality, and where the $2$-cocycle property \eqref{eq_2}
has been taken into account for the last equality.
\end{proof}

\section{A continuous field of $C^*$-algebras}\label{sec_field}
\setcounter{equation}{0}

In this section we consider a family of discrete magnetic systems which are parameterized by the elements $\epsilon$
of a compact Hausdorff space $\Omega$. The necessary continuity relations between
the various objects is encoded in the structure of a field of twisted crossed product $C^*$-algebras,
as introduced in \cite{Rie} and already used in a similar context in \cite{AMP}.

Our first aim is to recall the notion of a continuous field of $2$-cocycles \cite[Def.~2.1]{Rie}.
In our framework, with the locally compact group $\Z^d$ and the algebra $l^\infty(\Z^d)$, we get the following definition.

\begin{Definition}\label{def_cont_omega}
A \emph{continuous field over $\Omega$ of $2$-cocycles on $\Z^d$} is a map
$$
\tomega: \Omega\times \Z^d\times \Z^d \to \U(\Z^d)
$$
such that
\begin{enumerate}
\item[(i)] For any $x,y,z\in \Z^d$ and $\epsilon \in \Omega$ the following relations hold:
\begin{equation}\label{eq_cont_field}
\tomega(\epsilon;x+y,z)\;\tomega(\epsilon;x,y)  = \theta_x\tomega(\epsilon;y,z)\;\tomega(\epsilon;x,y+z)
\end{equation}
and
\begin{equation}\label{eq_norm_field}
\tomega(\epsilon;x,0) =\tomega(\epsilon;0,x)=1,
\end{equation}
\item[(ii)] For any fixed $(x,y)\in \Z^d\times \Z^d$ the map
\begin{equation}\label{cont_cocycle}
\Omega\ni \epsilon \mapsto \tomega(\epsilon;x,y)\in \U(\Z^d)
\end{equation}
is continuous.
\end{enumerate}
\end{Definition}

Note that in equation \eqref{eq_cont_field} the shift $\theta_x$ acts on
$\tomega(\epsilon;x,y)\in\U(\Z^d)$, as in the previous section.
Clearly, for each fixed $\epsilon$ the relation \eqref{eq_cont_field} corresponds to \eqref{eq_2} in the abstract framework or to the statement (i) of Lemma \ref{lem_2_cocycle}
in the magnetic case.
Similarly, \eqref{eq_norm_field} is a reminiscence of \eqref{eq_n} or of the normalization property (ii) of Lemma \ref{lem_2_cocycle}.
On the other hand, the new assumption \eqref{cont_cocycle} is the one which provides the necessary continuity condition.

For the subsequent algebraic construction we shall consider the algebra $C\big(\Omega;l^\infty(\Z^d)\big)$
which is going to replace the algebra $l^\infty(\Z^d)$ of the previous section.
Note that for any $f\in C\big(\Omega;l^\infty(\Z^d)\big)$ the action of $\Z^d$ by translations is defined by $\theta_x f(q,\epsilon)=f(q+x,\epsilon)$
for any $q,x\in \Z^d$ and $\epsilon \in \Omega$.
So, let us consider a continuous field $\tomega$ over $\Omega$ of $2$-cocycles on $\Z^d$ and observe that its definition is made such that it can be
interpreted as a normalized $2$-cocycle on $\Z^d$ taking values in the unitary group of the Abelian algebra $C\big(\Omega;l^\infty(\Z^d)\big)$. Indeed,
for fixed $x,y\in \Z^d$ one can set
\begin{equation*}
[\omega(x,y)](q,\epsilon):=[\tomega(\epsilon;x,y)](q)\equiv\tomega(q,\epsilon;x,y).
\end{equation*}
Then, by condition (ii) of Definition \ref{def_cont_omega} one infers that
$$
\omega(x,y)\in C\big(\Omega;l^\infty(\Z^d)\big)\qquad \hbox{ and } \qquad [\omega(x,y)](q,\epsilon)\in \T.
$$
In addition, \eqref{eq_norm_field} implies that $\omega$ is a normalized $2$-cocycle.
Note that the additional property
\begin{equation}\label{eq_cond_add}
\omega(x,-x)=1
\end{equation}
holds if and only if $\tomega(\epsilon;x,-x)=1$ for every $\epsilon \in \Omega$.
Since this property is satisfied by magnetic $2$-cocycles we shall assume it in the sequel.

In summary, starting from a continuous field $\tomega$ over $\Omega$ of $2$-cocycles on $\Z^d$ which satisfies the additional property
$\tomega(\epsilon;x,-x)=1$ for any $\epsilon\in \Omega$ and $x\in \Z^d$, we end up with the $2$-cocycle $\omega$
taking values in the unitary group of the algebra $C\big(\Omega;l^\infty(\Z^d)\big)$
and having the additional property \eqref{eq_cond_add}.
With this $2$-cocycle one defines in analogy with \eqref{eq_produit} the product for any
$\tf,\tg\in l^1\big(\Z^d;C\big(\Omega;l^\infty(\Z^d)\big)\big)$ by
\begin{align*}
\big[[\tf\diamond \tg](x)\big](q,\epsilon) & \equiv [\tf\diamond \tg](q,\epsilon;x)\\
\nonumber & := \sum_{y\in \Z^d} [\tf(y)](q,\epsilon)\; [\tg(x-y)](q+y,\epsilon) \;[\omega(y,x-y)](q,\epsilon) \\
\nonumber & \equiv \sum_{y\in \Z^d} \tf(q,\epsilon;y)\; \tg(q+y,\epsilon;x-y) \;\tomega(q,\epsilon;y,x-y), \quad \forall \;\! q,x\in \Z^d, \epsilon\in \Omega.
\end{align*}
We also endow $l^1\big(\Z^d;C\big(\Omega;l^\infty(\Z^d)\big)\big)$ with the involution
\begin{equation*}
[\tf^\diamond(x)](q,\epsilon)\equiv \tf^{\diamond}(q,\epsilon;x):= \overline{\tf(q+x,\epsilon;-x)}
\end{equation*}
and with the norm
$$
\|\tf\|_{1,\infty}:=\sum_{x\in \Z^d}\sup_{q\in \Z^d}\sup_{\epsilon\in \Omega}|\tf(q,\epsilon;x)| \qquad \tf\in l^1\big(\Z^d;C\big(\Omega;l^\infty(\Z^d)\big)\big),
$$
making it a unital Banach $^*$-algebra. The enveloping $C^*$-algebra of $l^1\big(\Z^d;C\big(\Omega;l^\infty(\Z^d)\big)\big)$ will be denoted by $\CC_\Omega$.

Let us now emphasize the main point of all this construction: there exists an \emph{evaluation map}
$$
e_\epsilon: l^1\big(\Z^d;C\big(\Omega;l^\infty(\Z^d)\big)\big) \to l^1\big(\Z^d;l^\infty(\Z^d)\big)
$$
defined for any $\tf\in l^1\big(\Z^d;C\big(\Omega;l^\infty(\Z^d)\big)\big)$ by $[e_\epsilon (\tf)](q;x);=\tf(q,\epsilon;x)$
for any $q,x\in \Z^d$ and $\epsilon\in\Omega$.
This map is clearly norm-decreasing and surjective, and extends continuously to a norm-decreasing $*$-homomorphism
$e_\epsilon:\CC_\Omega\to \CC_\epsilon$, with $\CC_\epsilon:=\CC(\omega_\epsilon)$ the $C^*$-algebra constructed in the previous section
with the $2$-cocycle $\omega_\epsilon:=\tomega(\epsilon;\cdot,\cdot):\Z^d\times\Z^d\to \U(\Z^d)$.

In this framework, the main result borrowed from \cite{Rie} reads:

\begin{Proposition}\label{prop_Rieffel}
Let $\tomega$ be a continuous field over $\Omega$ of $2$-cocycles on $\Z^d$ satisfying $\tomega(\epsilon;x,-x)=1$ for
every $\epsilon \in \Omega$ and $x\in \Z^d$.
Then the following properties hold:
\begin{enumerate}
\item[(i)] The map $e_\epsilon: \CC_\Omega\to \CC_\epsilon$ is surjective,
\item[(ii)] For any $\tf\in \CC_\Omega$ one has $\|\tf\|_{\CC_\Omega} = \sup_{\epsilon\in\Omega}\|e_\epsilon(\tf)\|_{\CC_\epsilon}$,
\item[(iii)] For any $\tf\in \CC_\Omega$ the map $\Omega\ni \epsilon\mapsto \|e_\epsilon(\tf)\|_{\CC_\epsilon} \in \R_+$ is continuous.
\end{enumerate}
\end{Proposition}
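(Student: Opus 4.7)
This statement is a specialization of Rieffel's construction of continuous fields of twisted crossed product $C^*$-algebras to the present discrete setting, and my plan is to follow his strategy.

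For (i), I would note that any $f\in l^1\big(\Z^d;l^\infty(\Z^d)\big)$ is obtained as $e_\epsilon(\tf)$ for the constant section $\tf(q,\epsilon';x):=f(q;x)$, which lies in $l^1\big(\Z^d;C\big(\Omega;l^\infty(\Z^d)\big)\big)$ since the assignment $\epsilon'\mapsto f(\cdot;x)\in l^\infty(\Z^d)$ is trivially continuous, and clearly $\|\tf\|_{1,\infty}=\|f\|_{1,\infty}<\infty$. Thus the image of $e_\epsilon$ contains the dense subalgebra $l^1\big(\Z^d;l^\infty(\Z^d)\big)$ of $\CC_\epsilon$, and since $e_\epsilon:\CC_\Omega\to\CC_\epsilon$ is a $^*$-homomorphism of $C^*$-algebras, its range is automatically closed, hence equal to $\CC_\epsilon$.

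For (ii), I would package the evaluations into the diagonal $^*$-homomorphism $E:\CC_\Omega\to\prod_{\epsilon\in\Omega}\CC_\epsilon$ (with the $\ell^\infty$-product of $C^*$-algebras) given by $E(\tf):=\big(e_\epsilon(\tf)\big)_{\epsilon\in\Omega}$. The inequality $\sup_\epsilon\|e_\epsilon(\tf)\|_{\CC_\epsilon}\leq\|\tf\|_{\CC_\Omega}$ follows from each $e_\epsilon$ being a $^*$-homomorphism, and the reverse follows once $E$ is known to be injective, since injective $^*$-homomorphisms between $C^*$-algebras are automatically isometric. Injectivity is immediate at the $l^1$ level, and propagates to $\CC_\Omega$ using that $\CC_\Omega$ carries a natural $C(\Omega)$-algebra structure (via multiplication of sections by continuous functions of $\epsilon$ embedded into the center of the multiplier algebra), for which $\{\ker e_\epsilon\}_{\epsilon\in\Omega}$ is the canonical family of fiber ideals whose intersection is known to be trivial.

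For (iii), I would treat upper and lower semi-continuity separately. Upper semi-continuity of $\epsilon\mapsto\|e_\epsilon(\tf)\|_{\CC_\epsilon}$ is automatic from the $C(\Omega)$-algebra structure. For lower semi-continuity, I would first establish continuity on the dense subset of $\tf\in l^1\big(\Z^d;C\big(\Omega;l^\infty(\Z^d)\big)\big)$ having finite support in the $\Z^d$-variable; for such $\tf$, using the faithful representations $\Rep^{\lambda_t^\epsilon}$ of each $\CC_\epsilon$ with transversal $1$-cochain $\lambda_t^\epsilon(q;x):=\tomega(0,\epsilon;q,x)$, the operator $\Rep^{\lambda_t^\epsilon}(e_\epsilon(\tf))$ is a finite sum of pieces whose coefficients depend continuously on $\epsilon$ by assumption (ii) of Definition \ref{def_cont_omega} and by the $C\big(\Omega;l^\infty(\Z^d)\big)$-continuity of $\tf$. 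This yields continuity of $\epsilon\mapsto\|e_\epsilon(\tf)\|_{\CC_\epsilon}$ on this dense subset, and (ii) then propagates continuity to arbitrary $\tf\in\CC_\Omega$ via uniform approximation by finitely-supported $l^1$ sections, since a uniform limit of continuous functions on $\Omega$ is continuous.

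The main obstacle is (iii): upper semi-continuity comes for free, but lower semi-continuity is where the continuity of $\tomega$ in $\epsilon$ must be genuinely exploited, and some care is required both in selecting a representation whose $\epsilon$-dependence is well-controlled and in reducing via density to the regime where norm continuity of the concrete operators can be established.
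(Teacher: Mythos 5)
Your argument for (i) is sound, and your reduction of (ii) to the injectivity of the diagonal map $\tf\mapsto\oplus_\epsilon e_\epsilon(\tf)$ is exactly the observation the paper makes. But two gaps remain, one of them fatal. The smaller one: in (ii) and in the ``automatic'' upper semi-continuity of (iii) you identify $\{\ker e_\epsilon\}_{\epsilon\in\Omega}$ with the canonical fiber ideals $\overline{C_\epsilon(\Omega)\;\!\CC_\Omega}$ of the $C(\Omega)$-algebra structure. Only the inclusion $\overline{C_\epsilon(\Omega)\;\!\CC_\Omega}\subseteq\ker e_\epsilon$ is free; the reverse inclusion --- equivalently, that the fiber $\CC_\Omega/\overline{C_\epsilon(\Omega)\;\!\CC_\Omega}$ is the full twisted crossed product $\CC_\epsilon$ and not a proper quotient of it --- is precisely what \cite[Prop.~2.3 and Thm.~2.4]{Rie} establish, and injectivity at the $l^1$-level does not by itself pass to the enveloping $C^*$-algebra. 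Triviality of $\bigcap_\epsilon\overline{C_\epsilon(\Omega)\;\!\CC_\Omega}$ gives no information about $\bigcap_\epsilon\ker e_\epsilon$ without that identification.

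The fatal gap is in your lower semi-continuity step. You claim that for $\th$ finitely supported in the $\Z^d$-variable the map $\epsilon\mapsto\Rep^{\lambda_t^\epsilon}\big(e_\epsilon(\th)\big)\in\B(\H)$ is norm continuous. It is not. For the single hopping term $\th(q,\epsilon;z)=\delta_{z,z_0}$ one has $[\Rep^{\lambda_t^\epsilon}(e_\epsilon(\th))u](x)=\tomega(0,\epsilon;x,z_0)\;\!u(x+z_0)$, so the norm of the difference of two such operators is $\sup_q\big|\tomega(0,\epsilon;q,z_0)-\tomega(0,\epsilon';q,z_0)\big|$. Definition \ref{def_cont_omega}(ii) controls the supremum over the hidden $l^\infty$-variable for each \emph{fixed} pair of group arguments; it says nothing about uniformity over the second group argument $q$ with the hidden variable frozen at $0$. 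In the scaling example of Section \ref{sec_Nenciu} with a nonzero constant field one has $\tomega(0,\epsilon;q,z_0)=\e^{i\epsilon c_q}$ where the phases $c_q$ form an unbounded arithmetic progression (the flux through the triangle $(0,q,q+z_0)$), so for all but countably many $\epsilon'$ the above supremum equals $2$: the operator-valued map is norm-discontinuous at every $\epsilon$ even though its norm is identically $1$. Hence continuity of $\epsilon\mapsto\|e_\epsilon(\th)\|_{\CC_\epsilon}$ cannot be obtained by proving norm continuity of the represented operators in any $\Rep^{\lambda^\epsilon}$. This is exactly why the paper does not argue this way: it invokes \cite[Thm.~2.5]{Rie}, whose proof of lower semi-continuity uses the regular representation --- where only $\tomega(\;\!\cdot\;\!,\epsilon;y,x-y)$ with the hidden variable left free enters, so Definition \ref{def_cont_omega}(ii) suffices --- together with the coincidence of the full and reduced twisted crossed products for the amenable group $\Z^d$ \cite[Thm.~5.1]{Zel}, an ingredient absent from your proposal (the paper also notes that Rieffel's arguments need a bounded approximate identity, supplied by \cite{Zel}).
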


\begin{proof}
Before mentioning the precise arguments borrowed from \cite{Rie}, let us stress that part of the proofs in that reference
relies on the existence of an bounded approximate identity. However, this technical point is automatically satisfied in our framework,
as shown in the seminal paper \cite[Sec.~2.28 \& 2.29]{Zel}.

Once this preliminary observation is taken into account, statement (i) is a direct consequence of \cite[Prop.~2.3]{Rie}.
For (ii) it is enough to observe that the map $\tf\mapsto \oplus_\epsilon e_\epsilon(\tf)$  is injective.
The upper semi-continuity of the map mentioned in (iii) follows from \cite[Thm.~2.4]{Rie} while the lower semi-continuity
of the same map is a consequence of \cite[Thm.~2.5]{Rie} together with the equality between the twisted crossed product algebra $\CC_\Omega$
and its reduced version, see \cite[Thm.~5.1]{Zel}.
\end{proof}

Let us now state and prove our main result:

\begin{Theorem}\label{thm_main}
Let $\tomega$ be a continuous field over $\Omega$ of $2$-cocycles on $\Z^d$ satisfying $\tomega(\epsilon;x,-x)=1$ for any $\epsilon\in \Omega$ and $x\in \Z^d$,
and let $\omega_\epsilon$ be defined by $\tomega(\epsilon;\cdot,\cdot)$.
Consider a family $\{h^\epsilon\}_{\epsilon\in \Omega}\subset l^1\big(\Z^d;l^\infty(\Z^d)\big)$ such that the following conditions are satisfied:
\begin{enumerate}
\item[(i)] For any $y\in \Z^d$, $\sup_{q\in \Z^d} \big|h^\epsilon(q;y)-h^{\epsilon'}(q;y)\big|\to 0$ as $\epsilon'\to \epsilon$ in $\Omega$,
\item[(ii)] $\sum_{y\in \Z^d} \sup_{\epsilon\in \Omega}\sup_{q\in \Z^d}|h^\epsilon(q;y)|<\infty$.
\item[(iii)]  $(h^\epsilon)^\diamond=h^\epsilon$.
\end{enumerate}
Then, for any family of $1$-cochains $\lambda^\epsilon: \Z^d\to \U(\Z^d)$ satisfying
$\lambda^\epsilon(x)\; \theta_x\lambda^\epsilon(y)\; \lambda^\epsilon(x+y)^{-1} = \omega_\epsilon(x,y)$,
the family of spectra $\Big\{\sigma\big(\Rep^{\lambda^\epsilon}(h^\epsilon)\big)\Big\}_{\epsilon \in \Omega}$ forms an outer and an inner continuous family
at every point of $\Omega$.
\end{Theorem}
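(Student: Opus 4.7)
The plan is to bundle the family $\{h^\epsilon\}_{\epsilon\in\Omega}$ into a single self-adjoint element $\th$ of the total algebra $\CC_\Omega$, transport the spectral question to that algebra via the faithful representations $\Rep^{\lambda^\epsilon}$ and the evaluation maps $e_\epsilon$, and then read off both inner and outer continuity from Proposition \ref{prop_Rieffel}(iii) applied not to $\th$ itself but to well-chosen continuous functions of $\th$.

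First I would define $\th \in l^1\bigl(\Z^d;C(\Omega;l^\infty(\Z^d))\bigr)$ by $\th(q,\epsilon;x) := h^\epsilon(q;x)$. Assumption (i) says precisely that for each fixed $x\in\Z^d$ the map $\epsilon\mapsto h^\epsilon(\cdot;x)$ is continuous in the $l^\infty(\Z^d)$-norm, so $\th(x)\in C(\Omega;l^\infty(\Z^d))$; assumption (ii) then gives $\|\th\|_{1,\infty}<\infty$, placing $\th$ in the dense $^*$-subalgebra of $\CC_\Omega$; and assumption (iii) translates directly into $\th^\diamond=\th$. So $\th$ is a self-adjoint element of $\CC_\Omega$ with $e_\epsilon(\th)=h^\epsilon$ in $\CC_\epsilon$ for every $\epsilon\in\Omega$.

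Since $\Rep^{\lambda^\epsilon}$ is a faithful representation of $\CC_\epsilon$ by Proposition \ref{prop_MPR}(iii), we have
\[
\sigma\bigl(\Rep^{\lambda^\epsilon}(h^\epsilon)\bigr) \;=\; \sigma_{\CC_\epsilon}\bigl(e_\epsilon(\th)\bigr).
\]
Because $e_\epsilon:\CC_\Omega\to\CC_\epsilon$ is a $^*$-homomorphism, continuous functional calculus commutes with it: for any $f\in C_0(\R)$, one has $f(\th)\in\CC_\Omega$ and $e_\epsilon\bigl(f(\th)\bigr)=f\bigl(e_\epsilon(\th)\bigr)$. Combined with Proposition \ref{prop_Rieffel}(iii) and the spectral mapping theorem, this yields the key statement that
\[
\Omega\ni\epsilon \;\longmapsto\; \bigl\|f\bigl(e_\epsilon(\th)\bigr)\bigr\|_{\CC_\epsilon} \;=\; \sup_{\lambda\in\sigma(e_\epsilon(\th))}|f(\lambda)|
\]
is continuous for every $f\in C_0(\R)$.

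The two continuity assertions then follow by the standard Urysohn trick. For outer continuity at $\epsilon_0$, given compact $\K\subset\R$ disjoint from $\sigma_{\epsilon_0}:=\sigma(e_{\epsilon_0}(\th))$, choose $f\in C_c(\R)$ with $f\equiv 1$ on $\K$ and $f\equiv 0$ on $\sigma_{\epsilon_0}$; then $\|f(e_{\epsilon_0}(\th))\|=0$, so by continuity $\|f(e_\epsilon(\th))\|<1$ in some neighbourhood $\NN$ of $\epsilon_0$, which rules out $\K\cap\sigma(e_\epsilon(\th))\neq\emptyset$ on $\NN$. For inner continuity, given open $\O$ containing some $\lambda_0\in\sigma_{\epsilon_0}$, pick $f\in C_c(\R)$ with $0\le f\le 1$, $\supp f\subset\O$ and $f(\lambda_0)=1$; then $\|f(e_{\epsilon_0}(\th))\|=1$, so continuity gives $\|f(e_\epsilon(\th))\|>1/2$ near $\epsilon_0$, which forces the existence of some $\lambda\in\sigma(e_\epsilon(\th))\cap\supp f\subset\O$. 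The only slightly delicate point is the bookkeeping around the construction of $\th$ and the verification that the evaluation map intertwines functional calculus, but this is automatic for any $^*$-homomorphism between $C^*$-algebras once Proposition \ref{prop_Rieffel} is in hand.
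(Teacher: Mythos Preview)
Your proof is correct and follows essentially the same route as the paper: package the family into a self-adjoint $\th\in\CC_\Omega$, use Proposition~\ref{prop_Rieffel}(iii) to get continuity of $\epsilon\mapsto\|f(e_\epsilon(\th))\|_{\CC_\epsilon}$, and then run the Urysohn argument for inner and outer continuity. The only cosmetic difference is that the paper first establishes continuity of $\epsilon\mapsto\|(h^\epsilon-z)^{-1}\|_{\CC_\epsilon}$ via the resolvent and then passes to general $\eta\in C_0(\R)$ by a density argument, whereas you invoke directly that the $^*$-homomorphism $e_\epsilon$ intertwines continuous functional calculus; both lead to the same key continuity statement \eqref{eq_cont_eta}.
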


Let us stress that condition (ii) is satisfied for example if there exists $f\in l^1(\Z^d)$ such that $|h^\epsilon(q;y)|\leq f(y)$
for any $\epsilon \in \Omega$ and $q\in \Z^d$.
In the following proof, we use the notation $C_0(\R)$ for the set of continuous functions on $\R$ which vanish at infinity.
Recall also that $\H=l^2(\Z^d)$.

\begin{proof}
a) Let us first observe that the conditions on $\{h^\epsilon\}_{\epsilon\in \Omega}$ have been chosen such that the function $\th:\Omega\times\Z^d\times \Z^d\to \C$
defined by $\th(\epsilon,q;x):=h^\epsilon(q;x)$ satisfies $\th\in l^1\big(\Z^d;C\big(\Omega;l^\infty(\Z^d)\big)\big)$.
As a consequence, $\th\in \CC_\Omega$ and the statements of Proposition \ref{prop_Rieffel} hold for $\th$ instead of $\tf$.
In particular, since $e_\epsilon(\th)=h^\epsilon$ one infers that the map
$$
\Omega\ni \epsilon\mapsto \|h^\epsilon\|_{\CC_\epsilon} \in \R_+
$$
is continuous. Furthermore, since $\th^{\diamond}=\th$ and since the algebra $l^1\big(\Z^d;C\big(\Omega;l^\infty(\Z^d)\big)\big)$ is unital,
one infers that for $z\in \C\setminus\R$ the element $\th-z$ is invertible in the $C^*$-algebra $\CC_\Omega$, with respect to the product $\diamond$.
Its inverse is simply denoted by $(\th-z)^{-1}$.
Since $e_\epsilon$ is a $*$-homomorphism, one gets that $e_\epsilon\big((\th-z)^{-1}\big)=(h^\epsilon-z)^{-1}$, and as a consequence the map
$$
\Omega \ni \epsilon \mapsto \big\|(h^\epsilon-z)^{-1}\big\|_{\CC_\epsilon}\in \R_+
$$
is also continuous.

Now, since $\Rep^{\lambda^\epsilon}$ defines a faithful representation of $\CC_\epsilon$ in $\B(\H)$ it follows that
$$
\|h^\epsilon\|_{\CC_\epsilon} = \big\|\Rep^{\lambda^\epsilon}(h^\epsilon)\big\|
$$
where the norm on the r.h.s.~corresponds to the usual norm in $\B(\H)$. If we set $H^\epsilon:= \Rep^{\lambda^\epsilon}(h^\epsilon)$,
which is a self-adjoint element of the $C^*$-algebra
$\Rep^{\lambda^\epsilon}(\CC_\epsilon)\subset \B(\H)$, one then deduces that the map
$$
\Omega\ni \epsilon\mapsto \|(H^\epsilon-z)^{-1}\|\in \R_+
$$
is continuous for any $z\in \C\setminus \R$.
Finally, by a density argument of the linear span of $\{(\cdot-z)^{-1}\}_{z\in \C\setminus \R}$, one
infers that for any $\eta\in C_0(\R)$, the map
\begin{equation}\label{eq_cont_eta}
\Omega\ni \epsilon \mapsto \|\eta(H^\epsilon)\|\in \R_+
\end{equation}
is continuous, see \cite[p.~364]{ABG}.

b) It remains to show that the continuity \eqref{eq_cont_eta} implies the inner and the outer continuity for the family of spectra
$\sigma(H^\epsilon)$. The following argument is directly borrowed from the proof of \cite[Prop.~2.5]{AMP} which we present here for the sake of completeness.
For the outer continuity, let $\epsilon_0\in \Omega$ and let $\K$ be a compact set in $\R$ such that $\K\cap \sigma(H^{\epsilon_0})=\emptyset$.
By Urysohn's lemma there exists $\eta\in C_0(\R)$ with $\eta\geq 0$ such that $\eta|_\K = 1$ and $\eta|_{\sigma(H^{\epsilon_0})}=0$, and therefore
$\eta(H^{\epsilon_0})=0$. By the continuity of \eqref{eq_cont_eta} one can then chose a neighbourhood $\NN$ of $\epsilon_0$ in $\Omega$ such that
for any $\epsilon\in \NN$ one has $\|\eta(H^\epsilon)\|\leq 1/2$.
By contradiction, if for some $\epsilon \in \NN$ one would have $\nu\in \K\cap \sigma(H^\epsilon)$ then  it would follow that
$$
1=\eta(\nu)\leq \sup_{\mu\in \sigma(H^\epsilon)}\eta(\mu) \leq \|\eta(H^\epsilon)\|\leq 1/2
$$
which is absurd. Thus, the family $\{\sigma(H^\epsilon)\}_{\epsilon\in \Omega}$ is outer continuous at every points of $\Omega$.

For the inner continuity, let $\O$ be an open subset of $\R$ such that there exists $\nu\in \O\cap \sigma(H^{\epsilon_0})$.
By Urysohn's lemma there exists $\eta\in C_0(\R)$ with $\eta(\nu)=1$ and $\supp \eta\subset \O$. As a consequence, one has
$\|\eta(H^{\epsilon_0})\|\geq 1$. By contradiction, assume now that for any neighbourhood $\NN$ of $\epsilon_0$ in $\Omega$ there exists $\epsilon \in \NN$ such that
$\O\cap \sigma(H^\epsilon)=\emptyset$. It follows that $\eta(H^\epsilon)=0$. However, this clearly contradicts the continuity provided in \eqref{eq_cont_eta}.
As a consequence, the family $\{\sigma(H^\epsilon)\}_{\epsilon\in \Omega}$ is inner continuous at every points of~$\Omega$.
\end{proof}

\section{The scaling example}\label{sec_Nenciu}
\setcounter{equation}{0}

In this section we provide the proof of Theorem \ref{thm_Nenciu}. On the way we also show that the setting considered in \cite{Nen}
is covered by our formalism. Note that only $d=2$ is considered in that reference, but that the extension for
arbitrary $d\in \N$ is harmless in our framework.

Let us now fix $\Omega=[0,1]$. In the following statement, $\triangle (x,y,z)$ denotes the triangle in $\R^d$ defined
by the three points $x,y,z\in \Z^d$.

\begin{Proposition}\label{prop_triangle}
Let $\phi$ be a magnetic potential which satisfies
$$
\big|\phi(x,y)+\phi(y,z)+\phi(z,x)\big| \leq \hbox{ area } \triangle (x,y,z).
$$
Then the map $\tomega$ defined for $\epsilon \in [0,1]$ and $q,x,y\in \Z^d$ by
\begin{equation}\label{eq_def_tomega}
\tomega(q,\epsilon;x,y) :=\exp\big\{i\epsilon \big[\phi(q,q+x)+\phi(q+x,q+x+y)+\phi(q+x+y,q)\big]\big\}\in \T
\end{equation}
is a continuous field over $[0,1]$ of $2$-cocycles on $\Z^d$.
\end{Proposition}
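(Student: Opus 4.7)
The plan is to verify the two defining properties of a continuous field of $2$-cocycles separately. For the algebraic identities \eqref{eq_cont_field} and \eqref{eq_norm_field}, I would observe that for each fixed $\epsilon \in [0,1]$ the scaled map $\phi_\epsilon(x,y) := \epsilon \phi(x,y)$ is still antisymmetric, hence a magnetic potential in the sense of \eqref{eq_phi}. Comparing the definition \eqref{eq_def_tomega} with \eqref{def_omega}, one sees that $\tomega(\epsilon;\cdot,\cdot)$ is exactly the magnetic $2$-cocycle associated with $\phi_\epsilon$. Consequently, Lemma \ref{lem_2_cocycle} applied to $\phi_\epsilon$ immediately yields both the $2$-cocycle relation \eqref{eq_cont_field} and the normalization \eqref{eq_norm_field} for each $\epsilon$.

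The substantial content of the proposition is therefore the continuity assertion \eqref{cont_cocycle}, namely that for any fixed $(x,y) \in \Z^d \times \Z^d$ the map $[0,1] \ni \epsilon \mapsto \tomega(\epsilon;x,y) \in \U(\Z^d)$ is continuous when the target is endowed with the supremum norm of $l^\infty(\Z^d)$. Writing
$$
\Phi(q;x,y) := \phi(q,q+x)+\phi(q+x,q+x+y)+\phi(q+x+y,q),
$$
the elementary estimate $\big|\e^{i\epsilon' t}-\e^{i\epsilon t}\big|\leq |\epsilon'-\epsilon|\;\!|t|$ gives
$$
\sup_{q\in\Z^d}\big|\tomega(q,\epsilon';x,y)-\tomega(q,\epsilon;x,y)\big| \leq |\epsilon'-\epsilon|\;\!\sup_{q\in\Z^d}|\Phi(q;x,y)|.
$$

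The key step, and the only place where the hypothesis on $\phi$ is used, is to show that $\sup_{q\in\Z^d}|\Phi(q;x,y)|$ is finite. By assumption,
$$
|\Phi(q;x,y)| \leq \text{area }\triangle(q,q+x,q+x+y).
$$
Since the area of a triangle in $\R^d$ is invariant under translations, the right-hand side equals the area of $\triangle(0,x,x+y)$, which depends only on $x$ and $y$. This yields a bound uniform in $q$, from which the required continuity follows immediately. There is no real obstacle here; the conceptual point is simply that the area bound in the hypothesis, being translation invariant, upgrades pointwise continuity of $\epsilon \mapsto \e^{i\epsilon \Phi(q;x,y)}$ into continuity in the $l^\infty(\Z^d)$-norm, which is exactly what Definition \ref{def_cont_omega}(ii) demands.
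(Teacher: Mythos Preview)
Your proof is correct and follows essentially the same route as the paper: verify the $2$-cocycle identities by recognizing $\epsilon\phi$ as a magnetic potential and invoking Lemma~\ref{lem_2_cocycle}, then obtain the $l^\infty$-continuity in $\epsilon$ from the translation invariance of the triangle area. The only cosmetic difference is that you use the Lipschitz bound $|\e^{i\epsilon' t}-\e^{i\epsilon t}|\leq |\epsilon'-\epsilon|\,|t|$ directly, whereas the paper expands the exponential series; the key observation---that the area bound is independent of $q$---is identical.
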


\begin{proof}
Clearly, for each fixed $\epsilon \in [0,1]$ the map $\omega(\epsilon;\cdot,\cdot)$ satisfied the normalized $2$-cocycle requirements as
mentioned in the point (i) of Definition \ref{def_cont_omega}.
For the condition (ii) of this definition, let $x,y\in \Z^d$ be fixed, and let $\epsilon,\epsilon'\in [0,1]$ and $q\in \Z^d$.
Then one has
\begin{align*}
& \big|\tomega(q,\epsilon';x,y)-\tomega(q,\epsilon;x,y)\big| \\
& \leq \Big|1-\exp\big\{i(\epsilon-\epsilon') \big[\phi(q,q+x)+\phi(q+x,q+x+y)+\phi(q+x+y,q)\big]\big\}\Big| \\
& \leq \sum_{n=1}^\infty \frac{|\epsilon-\epsilon'|^n}{n!}\;\! \Big|\phi(q,q+x)+\phi(q+x,q+x+y)+\phi(q+x+y,q)\Big|^n \\
& = \sum_{n=1}^\infty \frac{|\epsilon-\epsilon'|^n}{n!}\;\!\big|\hbox{ area } \triangle (q,q+x,q+x+y)\big|^n \\
& = \sum_{n=1}^\infty \frac{|\epsilon-\epsilon'|^n}{n!}\;\!\big|\hbox{ area } \triangle (0,x,x+y)\big|^n.
\end{align*}
Since the last expression is independent of $q$ one directly infers that
$$
\lim_{\epsilon'\to\epsilon} \big\|\tomega(\epsilon';x,y)-\tomega(\epsilon;x,y)\big\|_{\U(\Z^d)}=
\lim_{\epsilon'\to\epsilon} \sup_{q\in \Z^d}\big|\tomega(q,\epsilon';x,y)-\tomega(q,\epsilon;x,y)\big|=0
$$
which corresponds to the required continuity property.
\end{proof}

We can now show that Theorem \ref{thm_Nenciu} is a special case of Theorem \ref{thm_main}.

\begin{proof}[Proof of Theorem \ref{thm_Nenciu}]
Let us first observe that the initial conditions (i) and (ii) imposed on $h^\epsilon$ are equivalent to $h^\epsilon \in l^1\big(\Z^d;l^\infty(\Z^d)\big)$ and to
$(h^\epsilon)^\diamond=h^\epsilon$.
These two conditions together with the other two conditions imposed on $h^\epsilon$
imply that all assumptions on $h^\epsilon$ required by Theorem \ref{thm_main} are satisfied.

On the other hand, the condition imposed on the magnetic potential $\phi$ are precisely the one already used
in Proposition \ref{prop_triangle} for defining the continuous field $\tomega$ of $2$-cocycles in \eqref{eq_def_tomega}.
Observe also that the additional condition $\tomega(\epsilon;x,-x)=1$ holds for any $x\in \Z^d$.
Thus, we have checked so far that all assumptions of Theorem \ref{thm_main} are satisfied. It only remains to exploit its consequences.

For that purpose let us choose the $1$-cochain $\lambda^\epsilon$ defined as in \eqref{eq_obvious} by
$$
\lambda^\epsilon(q;x):=\e^{i\epsilon\phi(q,q+x)}.
$$
With this choice and as in \eqref{eq_Rep_phi} one obtains on any $u\in \H$
\begin{equation*}
[\Rep^{\lambda^\epsilon}(h^\epsilon)u](x) = \sum_{y\in \Z^d} h^\epsilon(x;y-x) \;\!\e^{i\epsilon\phi(x,y)} \;\!u(y) \equiv [H^\epsilon u](x)
\end{equation*}
or equivalently $\Rep^{\lambda^\epsilon}(h^\epsilon)=H^\epsilon$.
The statement of Theorem \ref{thm_Nenciu} can now be directly deduced from Theorem \ref{thm_main}.
\end{proof}



\end{document}